\numberwithin{equation}{section}
\numberwithin{figure}{section}
\theoremstyle{plain}
\newtheorem{thm}{\protect\theoremname}
\theoremstyle{definition}
\newtheorem{problem}[thm]{\protect\problemname}
\theoremstyle{definition}
\newtheorem{example}[thm]{\protect\examplename}
\theoremstyle{remark}
\newtheorem{rem}[thm]{\protect\remarkname}
\theoremstyle{plain}
\newtheorem{lyxalgorithm}[thm]{\protect\algorithmname}
\providecommand{\algorithmname}{Algorithm}
\providecommand{\examplename}{Example}
\providecommand{\problemname}{Problem}
\providecommand{\remarkname}{Remark}
\providecommand{\theoremname}{Theorem}
\begin{document}
\title{Computing Limits of Quotients of Multivariate Real Analytic Functions}
\author{Adam Strzebo\'nski}
\address{Wolfram Research Inc., 100 Trade Centre Drive, Champaign, IL 61820,
U.S.A. }
\email{adams@wolfram.com}
\date{January 31, 2021}
\begin{abstract}
We present an algorithm for computing limits of quotients of real
analytic functions. The algorithm is based on computation of a bound
on the \L ojasiewicz exponent and requires the denominator to have
an isolated zero at the limit point. 
\end{abstract}

\maketitle

\subsection*{Keywords:} Multivariate function limit, symbolic limit computation.

\section{Introduction}

Computation of limits is one of the basic problems of computational
calculus. In the univariate case computing limits of rational functions
is easy, and the state of the art limit computation algorithms \cite{G,SSH}
are applicable to large classes of functions. In the multivariate
case computing limits of real rational functions is a nontrivial problem
that has been a subject of recent research \cite{AKM,CMV,VHC,XZ,XZZ}.
In \cite{S15} we compared five methods for computation of limits
of real rational functions based on the Cylindrical Algebraic Decomposition
(CAD) algorithm. Here we extend the methods to computation of limits
of quotients of multivariate analytic functions.

The limit of a real function may not exist, however the lower limit
and the upper limit always exist. A weak version of the limit computation
problem consists of deciding whether the limit exists and, if it does,
finding the value of the limit. A strong version consists of finding
the values of the lower limit and the upper limit. 

Let us state the problems precisely. Denote $x=(x_{1},\ldots,x_{n})$,
$\bar{\mathbb{R}}=\mathbb{R}\cup\{-\infty,\infty\}$. Let $U\subseteq\mathbb{R}^{n}$
be an open set and let $\mathcal{A}(U)$ denote the set of analytic
functions on $U$. Let $g\in\mathcal{A}(U)$, $h\in\mathcal{A}(U)\setminus\{0\}$,
$D=\{u\in U\::\:h(u)\neq0\}$, $f:D\ni u\rightarrow\frac{g(u)}{h(u)}\in\mathbb{R}$,
and let $c\in U$.
\begin{problem}
\label{WeakLimit}Find $l\in\bar{\mathbb{R}}$ such that $l=\lim_{u\rightarrow c}f(u)$
or prove that such $l$ does not exist.
\end{problem}

\begin{problem}
\label{StrongLimit}Find $l_{1},l_{2}\in\bar{\mathbb{R}}$ such that
$l_{1}=\liminf_{u\rightarrow c}f(u)$ and $l_{2}=\limsup_{u\rightarrow c}f(u)$. 
\end{problem}

\begin{example}
Let $g=\sin(x^{2}+y^{2}+z^{2})$ and $h=3-\cos x-\cos y-\cos z$.
Then
\[
\lim_{(x,y,z)\rightarrow0}\frac{g(x,y,z)}{h(x,y,z)}=2
\]
\end{example}

\begin{example}
Let $g=\sin(xy)$ and $h=\cos x+\cos y-2$. Then
\begin{eqnarray*}
\liminf_{(x,y)\rightarrow0}\frac{g(x,y)}{h(x,y)} & = & -1\\
\limsup_{(x,y)\rightarrow0}\frac{g(x,y)}{h(x,y)} & = & 1
\end{eqnarray*}
 and $\lim_{(x,y)\rightarrow0}\frac{g(x,y)}{h(x,y)}$ does not exist.
\end{example}

\begin{figure}
\centering
\includegraphics[width=0.7 \textwidth, trim = 0mm 0mm 0mm 0mm, clip]{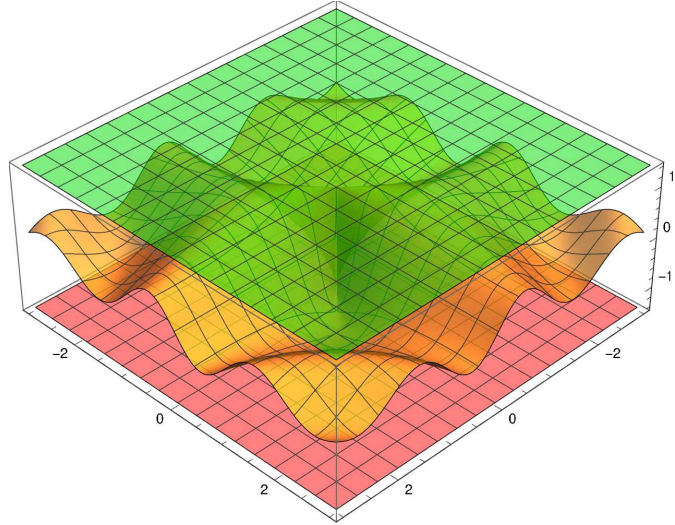}
\caption{\label{fig} The lower and the upper limit of 
$z=\frac{\sin(xy)}{\cos x+\cos y-2}$ at $0$. }
\end{figure}

Recently two algorithms partially solving Problem \ref{WeakLimit}
for rational functions have been proposed. The algorithm presented
in \cite{XZ} solves a modified version of the problem, namely it
decides whether the limit exists and is finite. The negative answer
includes both the case when the limit does not exist and the case
when the limit exists and is infinite. The algorithm uses Wu's elimination
method, rational univariate representations, and requires adjoining
two infinitesimal elements to the field. The algorithm presented in
\cite{AKM} (which generalizes algorithms of \cite{CMV,VHC}) solves
Problem \ref{WeakLimit} under the additional assumption that $c$
is an isolated zero of $h$. The authors use the theory of Lagrange
multipliers to reduce the problem to computing the limit along a real
algebraic set, and solve the reduced problem using regular chains
methods.

In \cite{S15} we presented five methods based on the CAD algorithm
that solve both Problem \ref{WeakLimit} and Problem \ref{StrongLimit}
for arbitrary rational functions. In this note we describe an algorithm
that reduces computation of limits of quotients of multivariate analytic
functions to computation of limits of rational functions. The algorithm
can be combined with any of the algorithms described in \cite{S15}
to solve both Problem \ref{WeakLimit} and Problem \ref{StrongLimit}
for quotients of analytic functions.

\section{The Algorithm}

Let $f=\frac{g}{h}$, where $g$ and $h$ are analytic functions in
a neighbourhood of $c\in\mathbb{R}^{n}$. Without loss of generality
we will assume that $c=0$. If $h(0)\neq0$ then $f$ is continuous
at $0$, and hence the limit can be obtained by evaluation. Therefore
w.l.o.g. we will assume that $h(0)=0$. The only computability assumption
we make about functions $g$ and $h$ is that for any $d\in\mathbb{N}$
we can compute the Taylor polynomials $T_{d}g$ and $T_{d}h$ of total
degree $d$. This is a typical case in a computer algebra system,
where $g$ and $h$ are given as expressions obtained by composing
analytic functions implemented in the system. Taylor polynomials of
arbitrary degree can be readily computed, but for instance algorithms
for testing whether an expression represents a function that is identically
zero may not be available.
\begin{rem}
\label{limitation}This specification is not sufficient to always
solve Problem \ref{WeakLimit} or \ref{StrongLimit}. For instance,
let $\varphi_{k}=x^{2}+y^{2k}$, let $\varphi_{\infty}=x^{2}$, and
suppose that $g$ and $h$ are expressions that as functions (but
not as expressions) satisfy equalities $g=\varphi_{k}$ and $h=\varphi_{m}$
for some (unknown) $k,m\in\mathbb{N}\cup\{\infty\}$. If $k=m$ then
$\lim_{(x,y)\rightarrow0}\frac{g(x,y)}{h(x,y)}=1$ otherwise the limit
does not exist. Suppose that no algorithms are available that would
simplify $g$ and $h$ to explicit polynomials. Computation of the
Taylor polynomials $T_{d}g$ and $T_{d}h$ for any finite $d$ does
not allow to decide whether $k$ and $m$ are finite, but greater
than $d/2$, or are infinite. Hence Problem \ref{WeakLimit} or \ref{StrongLimit}
cannot always be solved. Similarly, we cannot always decide whether
the zero of $h$ at $(0,0)$ is isolated.
\end{rem}

The algorithm we present here solves Problem \ref{WeakLimit} and
\ref{StrongLimit} if $h$ has an isolated zero at $0$. Otherwise
the algorithm does not terminate. Considering Remark \ref{limitation},
this is the best we can hope for with the given input specification. 

Let $U\subseteq\mathbb{R}^{n}$ be an open neighbourhood of $0$,
and let $\Vert u\Vert$ denote the Euclidean norm of $u\in\mathbb{R}^{n}$.
The algorithm is based on the following theorem, which is a special
case of the \L ojasiewicz inequality \cite{L}.

\begin{thm}
\label{LojIneq}Let $h\in\mathcal{A}(U)$ and suppose that $\{u\::\:\Vert u\Vert<\rho\wedge h(u)=0\}=\{0\}$
for some $\rho>0$. Then there exist positive constants $r$, $c$,
and $\alpha$ such that if $\Vert u\Vert<r$ then $\lvert h(u)\rvert\geq c\Vert u\Vert^{\alpha}$.
\end{thm}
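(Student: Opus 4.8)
The plan is to deduce the statement from the general comparison form of the \L ojasiewicz inequality, which compares two real analytic functions with nested zero sets. First I would record that form: for real analytic $P,Q$ defined on a neighbourhood of a compact set $K$ with $\{x\in K:Q(x)=0\}\subseteq\{x\in K:P(x)=0\}$, there exist constants $C,\mu>0$ such that
\[
\lvert P(x)\rvert\leq C\,\lvert Q(x)\rvert^{\mu}\quad\text{for all }x\in K.
\]
I would then apply this with $P(u)=\Vert u\Vert^{2}=\sum_i u_i^{2}$ and $Q=h$, on $K=\{u:\Vert u\Vert\leq\rho/2\}$. The hypothesis that $0$ is an isolated zero of $h$ gives $\{u\in K:h(u)=0\}=\{0\}=\{u\in K:\Vert u\Vert^{2}=0\}$, so the containment holds. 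The inequality then yields $\Vert u\Vert^{2}\leq C\lvert h(u)\rvert^{\mu}$, whence $\lvert h(u)\rvert\geq C^{-1/\mu}\Vert u\Vert^{2/\mu}$; setting $r=\rho/2$, $c=C^{-1/\mu}$, and $\alpha=2/\mu$ completes the argument.

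For a self-contained proof that does not invoke the general inequality as a black box, I would argue via the one-variable structure of the problem. Set $m(t)=\min_{\Vert u\Vert=t}\lvert h(u)\rvert$ for $0<t<\rho$; this is continuous and strictly positive, by the isolated-zero hypothesis together with compactness of spheres. The graph of $m$ is a subanalytic subset of $\mathbb{R}^{2}$, since it is built from $h$ by analytic equations ($\Vert u\Vert^{2}=t^{2}$ and $h(u)^{2}=s^{2}$), projection, and a fibrewise minimum, each of which preserves subanalyticity. A subanalytic function of one variable admits a Puiseux-type expansion near $0$, so $m(t)=a\,t^{q}(1+o(1))$ with $a>0$ and $q\in\mathbb{Q}_{>0}$; hence $m(t)\geq(a/2)t^{q}$ for small $t$, and taking $\alpha=q$, $c=a/2$ gives the bound. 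Equivalently, were the bound to fail for every $c,\alpha$, the Curve Selection Lemma would produce a real analytic arc $\gamma(\tau)\to0$ along which $\lvert h(\gamma(\tau))\rvert/\Vert\gamma(\tau)\Vert^{N}\to0$ for every $N$; expanding $\Vert\gamma(\tau)\Vert$ and $h(\gamma(\tau))$ as convergent power series in $\tau$ shows both behave like fixed powers of $\tau$, so their ratio has a finite exponent, a contradiction.

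The main obstacle is precisely this input from subanalytic geometry: the Curve Selection Lemma, equivalently the Puiseux description of one-variable subanalytic functions, which ultimately rests on resolution of singularities or Gabrielov's theorem. This is also where analyticity is indispensable: for merely smooth $h$ the statement is false. For instance $h(u)=\exp(-1/\Vert u\Vert^{2})$ (with $h(0)=0$) is smooth with an isolated zero at $0$, yet decays faster than every power $\Vert u\Vert^{\alpha}$, so no such $c,\alpha$ exist. The proof must therefore use analyticity exactly to force the power-law behaviour of $m(t)$ near the zero.
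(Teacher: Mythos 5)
The paper gives no proof of this theorem at all --- it simply calls it ``a special case of the \L ojasiewicz inequality'' and cites \cite{L}; your first paragraph is exactly that intended reduction (apply the comparison inequality with $P(u)=\Vert u\Vert^{2}$, $Q=h$ on a small closed ball, where the nested-zero-set hypothesis holds because the zero is isolated), and it is correct. Your self-contained sketch via the subanalytic function $m(t)=\min_{\Vert u\Vert=t}\lvert h(u)\rvert$ and its Puiseux expansion is also a sound route and correctly identifies where analyticity enters, as your $\exp(-1/\Vert u\Vert^{2})$ counterexample shows; the only loose point is the final ``equivalently'' aside, since the Curve Selection Lemma applied to a set $\{u:\lvert h(u)\rvert<\Vert u\Vert^{N}\}$ for fixed $N$ yields an arc whose exponent depends on the arc, so a single-curve contradiction does not by itself deliver the uniform exponent $\alpha$ --- the uniformity is exactly what the $m(t)$ argument (or the general inequality) supplies.
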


\begin{lyxalgorithm}
\label{MLIM}(MLIM)\\
Input: $g\in\mathcal{A}(U)$, $h\in\mathcal{A}(U)\setminus\{0\}$,
such that $h(0)=0$.\\
Output: $\liminf_{u\rightarrow0}\frac{g(u)}{h(u)}$.

\begin{enumerate}
\item Set $d=2$.
\item Compute $\lim_{u\rightarrow0}\frac{u_{1}^{2d}+\cdots+u_{n}^{2d}}{T_{2d-1}h(u)}$.
\item If the limit does not exist or is not zero, set $d=d+1$ and go to
step $(2)$.
\item Return $\liminf_{u\rightarrow0}\frac{T_{2d-1}g(u)}{T_{2d-1}h(u)}$.
\end{enumerate}
\end{lyxalgorithm}

\begin{thm}
If $h$ has an isolated zero at $0$ then Algorithm \ref{MLIM} terminates
and returns 
\[
\liminf_{u\rightarrow0}\frac{g(u)}{h(u)}
\]
Otherwise the algorithm does not terminate.
\end{thm}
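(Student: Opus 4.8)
The plan is to extract everything from the numeric exit condition of step $(2)$, using Theorem \ref{LojIneq} only to force that condition to be met eventually. Write $P_{d}=T_{2d-1}h$, $Q_{d}=T_{2d-1}g$, $R_{d}=h-P_{d}$, $S_{d}=g-Q_{d}$; analyticity gives a constant $C_{d}$ with $\lvert R_{d}(u)\rvert,\lvert S_{d}(u)\rvert\le C_{d}\Vert u\Vert^{2d}$ near $0$, and $u_{1}^{2d}+\cdots+u_{n}^{2d}$ is comparable to $\Vert u\Vert^{2d}$. Let $(\star_{d})$ denote the event that the step-$(2)$ limit exists and equals $0$. First I would show that $(\star_{d})$ already forces $0$ to be an isolated zero of $h$: since the numerator is $\asymp\Vert u\Vert^{2d}$ and vanishes only at $0$, $(\star_{d})$ makes $P_{d}$ nonzero on a punctured ball $0<\Vert u\Vert<r$ (otherwise the quotient would be unbounded near the zeros of $P_{d}$) with $\Vert u\Vert^{2d}/\lvert P_{d}\rvert\to0$ there; shrinking $r$ so that $C_{d}\Vert u\Vert^{2d}<\tfrac12\lvert P_{d}\rvert$ gives $\lvert h\rvert=\lvert P_{d}+R_{d}\rvert\ge\tfrac12\lvert P_{d}\rvert>0$ on that ball, and also $R_{d}/h\to0$ and $S_{d}/h\to0$.

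This settles the two outer claims at once. If $0$ is not an isolated zero of $h$, then $(\star_{d})$ can hold for no $d$, so step $(3)$ never exits and the algorithm does not terminate. If $0$ is an isolated zero, Theorem \ref{LojIneq} gives $c,\alpha>0$ with $\lvert h(u)\rvert\ge c\Vert u\Vert^{\alpha}$ near $0$; for any integer $d$ with $2d>\alpha$ I get $\lvert P_{d}\rvert\ge\lvert h\rvert-\lvert R_{d}\rvert\ge c\Vert u\Vert^{\alpha}-C_{d}\Vert u\Vert^{2d}\ge\tfrac{c}{2}\Vert u\Vert^{\alpha}$ for small $\Vert u\Vert$, so $\lvert(u_{1}^{2d}+\cdots+u_{n}^{2d})/P_{d}\rvert\le\tfrac{2}{c}\Vert u\Vert^{2d-\alpha}\to0$ and $(\star_{d})$ holds. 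Hence $(\star_{d})$ holds for all large $d$, the loop exits at the least $d$ for which the step-$(2)$ limit is $0$, and the algorithm terminates.

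The remaining, and I expect hardest, task is correctness. Let $d^{*}$ be the value of $d$ at termination, so $(\star_{d^{*}})$ holds. The naive hope $\frac{g}{h}-\frac{Q_{d^{*}}}{P_{d^{*}}}\to0$ is false --- for $g=1$, $h=x^{2}+y^{2}+x^{4}$ the difference tends to $-1$ along $y=0$ while both quotients blow up, yet their liminfs agree --- so the error must be factored multiplicatively rather than subtracted. On the punctured ball from the first paragraph, where $h\neq0$ and $P_{d^{*}}\neq0$, I would write
\[
\frac{g}{h}=\frac{Q_{d^{*}}}{P_{d^{*}}}\cdot\frac{P_{d^{*}}}{h}+\frac{S_{d^{*}}}{h}=\lambda\,\frac{Q_{d^{*}}}{P_{d^{*}}}+\mu,\qquad\lambda=1-\frac{R_{d^{*}}}{h}\to1,\quad\mu=\frac{S_{d^{*}}}{h}\to0 .
\]

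The proof then reduces to the elementary fact that if $\phi=\lambda\psi+\mu$ on a punctured neighbourhood of $0$ with $\lambda\to1$ and $\mu\to0$, then $\liminf_{u\to0}\phi=\liminf_{u\to0}\psi$ in $\bar{\mathbb{R}}$. I would prove it by shrinking the neighbourhood so that $\lambda\in[\tfrac12,2]$, choosing a sequence $u_{k}\to0$ realizing $\liminf\phi$, and passing to $\psi=(\phi-\mu)/\lambda$ to obtain $\liminf\psi\le\liminf\phi$; the finite case uses $\lambda\to1,\mu\to0$, while the $\pm\infty$ cases use only $\lambda>0$. The reverse inequality follows from the symmetric relation $\psi=(1/\lambda)\phi-(\mu/\lambda)$. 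Applying this with $\phi=g/h$ and $\psi=Q_{d^{*}}/P_{d^{*}}$ identifies the returned value $\liminf_{u\to0}\frac{Q_{d^{*}}}{P_{d^{*}}}$ with $\liminf_{u\to0}\frac{g}{h}$, completing the proof. The one point requiring care beyond routine estimation is precisely this last step: recognizing that the Taylor truncation perturbs the quotient by a factor tending to $1$ plus a term tending to $0$, not by a vanishing additive error, and that such a perturbation preserves the liminf even where the quotient is unbounded.
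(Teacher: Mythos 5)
Your proof is correct, and two of its three parts coincide with the paper's. The termination argument (the \L ojasiewicz bound plus $\lvert R_{2d-1}h(u)\rvert\leq C_{d}\Vert u\Vert^{2d}$) is the same, and your correctness decomposition is the paper's in disguise: the paper writes $\frac{g}{h}$ as $(\frac{T_{2d-1}g}{T_{2d-1}h}+\frac{R_{2d-1}g}{T_{2d-1}h})/(1+\frac{R_{2d-1}h}{T_{2d-1}h})$, which is exactly your $\lambda\psi+\mu$ with $\lambda=T_{2d-1}h/h$ and $\mu=R_{2d-1}g/h$; the only difference is that you state and sketch a proof of the elementary lemma (a perturbation $\phi=\lambda\psi+\mu$ with $\lambda\to1$, $\mu\to0$ preserves $\liminf$ in $\bar{\mathbb{R}}$) that the paper passes over with a bare ``therefore'' --- a point in your favour, since the quotients may be unbounded and the additive error need not vanish, as your $g=1$, $h=x^{2}+y^{2}+x^{4}$ example shows. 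The genuine divergence is in the non-termination direction. The paper argues it directly, with a case split on whether $T_{2d-1}h$ has an isolated zero at $0$: if not, the step-$(2)$ quotient is unbounded; if so, it uses the identity $T_{2d-1}h=-R_{2d-1}h$ on the zero set $Z$ of $h$ to bound the quotient below by $n^{-d}M^{-1}>0$ along $Z\setminus\{0\}$, which accumulates at $0$. You instead prove the contrapositive in one stroke: if the step-$(2)$ test succeeds at some $d$, then $T_{2d-1}h$ is nonvanishing on a punctured ball with $\Vert u\Vert^{2d}/\lvert T_{2d-1}h\rvert\to0$, whence $\lvert h\rvert\geq\frac{1}{2}\lvert T_{2d-1}h\rvert>0$ there and the zero of $h$ at $0$ is isolated. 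This is tidier: it absorbs the paper's first case, dispenses with the zero-set identity, and shows the exit test is self-certifying (it can only succeed when the hypothesis of the positive direction actually holds). What the paper's two-case version buys is an explicit picture of how the test fails in each configuration, which motivates its later example showing that the delicate case --- $T_{2d-1}h$ with an isolated zero while $h$ has a non-isolated one --- really occurs.
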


\begin{proof}
Suppose that $h$ has an isolated zero at $0$. By Theorem \ref{LojIneq}
there exist positive constants $r$, $c$, and $\alpha$ such that
if $\Vert u\Vert<r$ then $\lvert h(u)\rvert\geq c\Vert u\Vert^{\alpha}$.
To prove that Algorithm \ref{MLIM} terminates it suffices to show
that if $2d>\alpha$ then
\[
\lim_{u\rightarrow0}\frac{u_{1}^{2d}+\cdots+u_{n}^{2d}}{T_{2d-1}h(u)}=0
\]
Let $R_{2d-1}h(u)=h(u)-T_{2d-1}h(u)$. We have
\[
\frac{u_{1}^{2d}+\cdots+u_{n}^{2d}}{\lvert T_{2d-1}h(u)\rvert}\leq\frac{\Vert u\Vert^{2d}}{\lvert h(u)-R_{2d-1}h(u)\rvert}\leq\frac{1}{\lvert\frac{\lvert h(u)\rvert}{\Vert u\Vert^{2d}}-\frac{\lvert R_{2d-1}h(u)\rvert}{\Vert u\Vert^{2d}}\rvert}
\]
Since $R_{2d-1}h(u)$ is an analytic function whose Taylor series
does not contain terms of degree lower than $2d$, $\frac{\lvert R_{2d-1}h(u)\rvert}{\Vert u\Vert^{2d}}$
is bounded in a neighbourhood of $0$. Moreover, if $\Vert u\Vert<r$,
\[
\frac{\lvert h(u)\rvert}{\Vert u\Vert^{2d}}\geq\frac{c\Vert u\Vert^{\alpha}}{\Vert u\Vert^{2d}}=c\Vert u\Vert^{\alpha-2d}\underset{u\rightarrow0}{\longrightarrow}\infty
\]
hence 
\[
\lim_{u\rightarrow0}\frac{1}{\lvert\frac{\lvert h(u)\rvert}{\Vert u\Vert^{2d}}-\frac{\lvert R_{2d-1}h(u)\rvert}{\Vert u\Vert^{2d}}\rvert}=0
\]
which proves that
\[
\lim_{u\rightarrow0}\frac{u_{1}^{2d}+\cdots+u_{n}^{2d}}{T_{2d-1}h(u)}=0
\]
To show that Algorithm \ref{MLIM} returns $\liminf_{u\rightarrow0}\frac{g(u)}{h(u)}$
note that
\[
\frac{g(u)}{h(u)}=\frac{T_{2d-1}g(u)+R_{2d-1}g(u)}{T_{2d-1}h(u)+R_{2d-1}h(u)}=\frac{\frac{T_{2d-1}g(u)}{T_{2d-1}h(u)}+\frac{R_{2d-1}g(u)}{T_{2d-1}h(u)}}{1+\frac{R_{2d-1}h(u)}{T_{2d-1}h(u)}}
\]
We have 
\[
\frac{R_{2d-1}g(u)}{T_{2d-1}h(u)}=\frac{R_{2d-1}g(u)}{\Vert u\Vert^{2d}}\frac{\Vert u\Vert^{2d}}{T_{2d-1}h(u)}
\]
Since $R_{2d-1}g(u)$ is an analytic function whose Taylor series
does not contain terms of degree lower than $2d$, $\frac{\lvert R_{2d-1}g(u)\rvert}{\Vert u\Vert^{2d}}$
is bounded in a neighbourhood of $0$. Moreover, 
\[
\frac{\Vert u\Vert^{2d}}{\lvert T_{2d-1}h(u)\rvert}\leq n^{d}\frac{u_{1}^{2d}+\cdots+u_{n}^{2d}}{\lvert T_{2d-1}h(u)\rvert}\underset{u\rightarrow0}{\longrightarrow}0
\]
hence
\[
\lim_{u\rightarrow0}\frac{R_{2d-1}g(u)}{T_{2d-1}h(u)}=0
\]
Similarly
\[
\lim_{u\rightarrow0}\frac{R_{2d-1}h(u)}{T_{2d-1}h(u)}=0
\]
and therefore
\[
\liminf_{u\rightarrow0}\frac{g(u)}{h(u)}=\liminf_{u\rightarrow0}\frac{T_{2d-1}g(u)}{T_{2d-1}h(u)}
\]

Suppose now that the zero of $h$ at $0$ is not isolated. Let $d\geq2$.

If the zero of $T_{2d-1}h$ at $0$ is not isolated, then
\[
\frac{u_{1}^{2d}+\cdots+u_{n}^{2d}}{T_{2d-1}h(u)}
\]
attains arbitrarily large values in any neighbourhood of $0$, and
hence
\[
\lim_{u\rightarrow0}\frac{u_{1}^{2d}+\cdots+u_{n}^{2d}}{T_{2d-1}h(u)}
\]
does not exist or is not zero.

If the zero of $T_{2d-1}h$ at $0$ is isolated, then $\{u\::\:\Vert u\Vert\leq\rho\wedge T_{2d-1}h(u)=0\}=\{0\}$
for some $\rho>0$. Since $R_{2d-1}h(u)$ is an analytic function
whose Taylor series does not contain terms of degree lower than $2d$,
there exists $M>0$ such that $\frac{\lvert R_{2d-1}h(u)\rvert}{\Vert u\Vert^{2d}}\leq M$
for all $\Vert u\Vert\leq\rho$. Let $Z=\{u\::\:\Vert u\Vert\leq\rho\wedge h(u)=0\}$.
For $u\in Z\setminus\{0\}$ we have
\[
T_{2d-1}h(u)=-R_{2d-1}h(u)
\]
and hence
\[
\frac{u_{1}^{2d}+\cdots+u_{n}^{2d}}{\lvert T_{2d-1}h(u)\rvert}\geq n^{-d}\frac{\Vert u\Vert^{2d}}{\lvert R_{2d-1}h(u)\rvert}\geq n^{-d}M^{-1}>0
\]
Since $0$ is a limit point of $Z\setminus\{0\}$,
\[
\lim_{u\rightarrow0}\frac{u_{1}^{2d}+\cdots+u_{n}^{2d}}{T_{2d-1}h(u)}
\]
does not exist or is not zero. Therefore Algorithm \ref{MLIM} does
not terminate.
\end{proof}
\begin{rem}
Algorithm \ref{MLIM} can be adapted to compute the upper limit or
the limit of $f(u)$ by returning the upper limit or the limit of
$\frac{T_{2d-1}g(u)}{T_{2d-1}h(u)}$ in step $(4)$.
\end{rem}

\begin{example}
Algorithm \ref{MLIM} may not stop for the first $d$ such that $T_{2d-1}h$
has an isolated zero at $0$, even if $T_{2d-1}h=h$. Let $h(x,y)=(x^{n})^{2}+(x-y^{n})^{2}$
(cf. \cite{KS}, Example 1). Then $T_{2d-1}h=h$ for $d\geq n+1$.
However, since $h(t^{n},t)=t^{2n^{2}}$
\[
\lim_{u\rightarrow0}\frac{x^{2d}+y^{2d}}{T_{2d-1}h(x,y)}
\]
will not be zero for any $d\leq n^{2}$.
\end{example}

\begin{example}
The second part of the proof does need two cases, that is the zero
of $T_{2d-1}h$ at $0$ may be isolated even if the zero of $h$ at
$0$ is not isolated. Let $h(x,y)=y^{4}+(y-x^{2})^{2}-x^{6}-y^{6}$.
Then $T_{5}h=y^{4}+(y-x^{2})^{2}$ has an isolated zero at $0$. However,
the zero of $h$ at $0$ is not isolated. In a neighbourhood of zero
there are two analytic solutions of $h(x,y)=0$ with initial series
terms given by
\begin{eqnarray*}
y_{1}(x) & = & x^{2}-x^{3}+\frac{x^{5}}{2}-2x^{6}+\frac{25x^{7}}{8}+\cdots\\
y_{2}(x) & = & x^{2}+x^{3}-\frac{x^{5}}{2}-2x^{6}-\frac{25x^{7}}{8}+\cdots
\end{eqnarray*}
\end{example}

\begin{rem}
In some cases it is possible to detect that the zero of $h$ at $0$
is not isolated and terminate the algorithm. For instance if $h_{m}$
is the lowest degree nonzero form of the Taylor series of $h$ and
$h_{m}(a)>0$ and $h_{m}(b)<0$ for some $a,b\in\mathbb{R}^{n}$,
then $h(ta)>0$ and $h(tb)<0$ for $t\in(0,\epsilon)$ with some $\epsilon>0$,
and hence the zero of $h$ at $0$ is not isolated.

The number of limit computations in step $(2)$ can be reduced by
using fast negative criteria to decide that the zero of $T_{2d-1}h(u)$
at $0$ is not isolated.
\end{rem}

\section{Example}

Let us compute the lower limit and the upper limit of $\frac{g(x,y,z)}{h(x,y,z)}$
at $0$, where
\begin{eqnarray*}
g & = & \exp(\sin(x^{2}+y^{4}+z^{6}))-1\\
h & = & \sqrt{\cos(x)-\sin(y^{2})-z^{4}}-1
\end{eqnarray*}
For $d=2$ in step $(2)$ we have $T_{2d-1}h=\frac{-x^{2}-2y^{2}}{4}$.
Since $T_{2d-1}h(0,0,z)=0$, the limit 
\[
\lim_{(x,y,z)\rightarrow0}\frac{x^{4}+y^{4}+z^{4}}{T_{2d-1}h}
\]
does not exist, and so in step $(3)$ we set $d=3$ and go back to
step $(2)$. Now 
\[
T_{2d-1}h=\frac{-x^{4}-12x^{2}y^{2}-24x^{2}-12y^{4}-48y^{2}-48z^{4}}{96}
\]
and
\[
\lim_{(x,y,z)\rightarrow0}\frac{x^{6}+y^{6}+z^{6}}{T_{2d-1}h}=0
\]
hence we move on to step $(4)$. We have $T_{2d-1}g=\frac{x^{4}+2x^{2}+2y^{4}}{2}$
and the returned value is
\[
\liminf_{(x,y,z)\rightarrow0}\frac{g(x,y,z)}{h(x,y,z)}=\liminf_{(x,y,z)\rightarrow0}\frac{T_{2d-1}g(x,y,z)}{T_{2d-1}h(x,y,z)}=-4
\]
To find the upper limit we just need to compute the upper limit in
step $(4)$
\[
\limsup_{(x,y,z)\rightarrow0}\frac{g(x,y,z)}{h(x,y,z)}=\limsup_{(x,y,z)\rightarrow0}\frac{T_{2d-1}g(x,y,z)}{T_{2d-1}h(x,y,z)}=0
\]
To compute limits of rational functions Algorithm \ref{MLIM} can
use any of the algorithms described in \cite{S15}. We used the implementation
of these algorithms in \emph{Mathematica}. In this example methods
based on topological properties performed better. Algorithm 15 (TLIM)
took $77$ seconds when using Algorithm 14 (ZCQ2) and $323$ seconds
when using Algorithm 13 (ZCQ1). Methods based on optimization were
not able to complete the computation in $12$ hours. The most time-consuming
part of the computation is the limit in step $(2)$ with $d=3$.

\bibliographystyle{plain}

\end{document}